\theoremstyle{plain}
\newtheorem{theorem}{Theorem}
\newtheorem{claim}[theorem]{Claim}
\newtheorem{definition}[theorem]{Definition}
\newtheorem{proposition}[theorem]{Proposition}
\newtheorem{property}[theorem]{Property}
\theoremstyle{remark}
\newtheorem{remark}[theorem]{Remark}
\newtheorem{example}[theorem]{Example}
\newcommand{\bit}{\begin{itemize}}
\newcommand{\eit}{\end{itemize}}
\newcommand{\ben}{\begin{enumerate}}
\newcommand{\een}{\end{enumerate}}
\newcommand{\be}{\begin{equation}}
\newcommand{\ee}{\end{equation}}
\newcommand{\ba}{\begin{array}}
\newcommand{\ea}{\end{array}}
\newcommand{\dd}{\mathrm{d}}
\DeclareMathOperator*{\argmin}{arg\,min}
\newcommand{\abs}[1]{\left|#1\right|}
\newcommand\cA{\mathcal A}
\newcommand\cB{\mathcal B}
\newcommand\cH{\mathcal H}
\newcommand\cP{\mathcal P}
\newcommand\R{\mathbb R}
\renewcommand\l{\lambda}
\newcommand\bh{\cB(\cH)}
\newcommand{\fel}{\frac{1}{2}}
\newcommand{\D}{\mathbf{D}}
\newcommand\tr{\operatorname{Tr}}
\newcommand{\ler}[1]{\left( #1 \right)}
\newcommand\A{\mathcal A}
\newcommand{\lerc}[1]{\left\{ #1 \right\}}
\begin{document}

\title{Quantum Hellinger distances revisited}

\author{J\'ozsef Pitrik}
\address[J\'ozsef Pitrik]{MTA-BME Lend\"ulet (Momentum) Quantum Information Theory Research Group, and Department of Analysis, Institute of Mathematics\\
Budapest University of Technology and Economics\\
H-1521 Budapest, Hungary}
\email{pitrik@math.bme.hu}
\urladdr{http://www.math.bme.hu/\~{}pitrik}

\author{D\'aniel Virosztek}
\address[D\'aniel Virosztek]{Institute of Science and Technology Austria\\
Am Campus 1, 3400 Klosterneuburg, Austria}
\email{daniel.virosztek@ist.ac.at}
\urladdr{http://pub.ist.ac.at/\~{}dviroszt}

\thanks{J. Pitrik was supported by the Hungarian Academy of Sciences Lend\"ulet-Momentum grant for Quantum Information Theory, no. 96 141, and by the Hungarian National Research, Development and Innovation Office (NKFIH) via grants no. K119442, no. K124152, and no. KH129601. D. Virosztek was supported by the ISTFELLOW program of the Institute of Science and Technology Austria (project code IC1027FELL01), by the European Union’s Horizon 2020 research and innovation program under the Marie Sklodowska-Curie Grant Agreement No. 846294, and partially supported by the Hungarian National Research, Development and Innovation Office (NKFIH) via grants no. K124152, and no. KH129601.
}


\keywords{quantum Hellinger distance, Kubo-Ando mean, weighted multivariate mean, barycenter, data processing inequality, convexity}
\subjclass[2010]{Primary: 47A64. Secondary: 15A24, 81Q10.}

\begin{abstract}
This short note aims to study quantum Hellinger distances investigated recently by Bhatia et al. \cite{bhatia-paper} with a particular emphasis on barycenters. We introduce the family of generalized quantum Hellinger divergences that are of the form $\phi(A,B)=\mathrm{Tr} \left((1-c)A + c B - A \sigma B \right),$ where $\sigma$ is an arbitrary Kubo-Ando mean, and $c \in (0,1)$ is the weight of $\sigma.$ We note that these divergences belong to the family of maximal quantum $f$-divergences, and hence are jointly convex, and satisfy the data processing inequality (DPI). We derive a characterization of the barycenter of finitely many positive definite operators for these generalized quantum Hellinger divergences. We note that the characterization of the barycenter as the weighted multivariate $1/2$-power mean, that was claimed in \cite{bhatia-paper}, is true in the case of commuting operators, but it is not correct in the general case. 
\end{abstract}
\maketitle

\section{Introduction} \label{sec:intro}
\subsection{Motivation, goals} \label{susec:motiv-goals}
Given a measure space $\ler{X,\A, \mu}$ and probability measures $\rho$ and $\sigma$ that are absolutely continuous with respect to $\mu,$ the classical \emph{squared Hellinger distance} or \emph{Hellinger divergence} of $\rho$ and $\sigma$ is defined as

\be \label{eq:hell-original}
\dd_H^2\ler{\rho, \sigma}=\fel \int_X \ler{\ler{\frac{\dd \rho}{\dd \mu}}^\fel-\ler{\frac{\dd \sigma}{\dd \mu}}^\fel}^2 \dd \mu,
\ee
where $\dd \rho / \dd \mu$ and $\dd \sigma / \dd \mu$ denote the Radon–Nikodym derivatives \cite{hellinger-orig}.
The Hellinger divergence is a special {\it Csisz\'ar-Morimoto $f$-divergence} \cite{csiszar-orig, morimoto-orig} generated by the convex function $f(x)=\ler{\sqrt{x}-1}^2,$ and it has several possible counterparts in quantum information theory. One of them is the squared \emph{Bures distance or Wasserstein metric,} see, e.g., the most recent works of Bhatia et al. \cite{bhat-jain-lim}, Dinh et al. \cite{dinh-et-al-18}, and Moln\'ar \cite{molnar-bures}. Another important quantum analogue of the classical Hellinger divergence has been investigated in \cite{bhatia-paper}, namely the quantity
\be \label{eq:q-hell-def}
\dd_H^2 \ler{A, B}=\tr \ler{\fel\ler{A+B}-A\#B},
\ee
where $A, B$ are density operators representing quantum states, or even more generally, positive operators, and $\#$ is the geometric mean introduced by \emph{Pusz} and \emph{Woronowicz} \cite{pusz-woron}, which is a particularly important \emph{Kubo-Ando mean} \cite{ando-laa-79,kubo-ando,ando-lecture}.
\par
In this note, we introduce a far-reaching generalization of the quantum Hellinger divergence \eqref{eq:q-hell-def}, namely, the family of { \it generalized quantum Hellinger divergences} of the form 
\be \label{eq:g-q-h-d-int}
\phi(A,B)=\mathrm{Tr} \left((1-c)A + c B - A \sigma B \right),
\ee
where $\sigma$ is an arbitrary Kubo-Ando mean, and $c \in (0,1)$ is the weight of $\sigma.$
We will note that these divergences belong to the family of \emph{maximal quantum $f$-divergences,} and hence are jointly convex, and satisfy the data processing inequality (DPI). Moreover, we will show an intimate relation between generalized quantum Hellinger divergences and operator valued Bregman divergences (Claim \ref{claim:breg-conn}). By this close relation, we verify in Claim \ref{claim:div-valid}, that generalized quantum Hellinger divergences are genuine divergences in the sense of \cite[Sec. 1.2 \& 1.3]{amari-book}. Note that this is not the case for maximal quantum $f$-divergences in general, see Remark \ref{rem:nons-m-q-f-div}. As the main result of this paper, we derive a characterization of the barycenter of finitely many positive definite operators for these generalized quantum Hellinger divergences. We will also note that the characterization of the barycenter as the weighted multivariate power mean of order $1/2$, that was claimed in the work of Bhatia et al. \cite[Thm. 9]{bhatia-paper}, is true in the case of commuting operators, but it is not correct in the general case. 

\subsection{Basic notions, notation} \label{susec:b-not-not}
Operator monotone functions mapping the positive half-line $(0, \infty)$ into itself admit a transparent integral-representation by L\"owner's theory.
In the seminal paper of Kubo and Ando \cite{kubo-ando}, the following integral representation was considered:
\be \label{eq:int-rep-orig}
f(x)=\int_{[0, \infty]}\frac{x(1+t)}{x+t} \dd m(t) \qquad \ler{x>0},
\ee
where $m$ is some positive Radon measure on the extended half-line $[0, \infty].$
By a simple push-forward of $m$ by the transformation $T: [0, \infty] \rightarrow [0,1]; \, t \mapsto \l:=\frac{t}{t+1},$ we get the following integral-representation of positive operator monotone functions on $(0, \infty):$
\be \label{eq:int-rep-new}
f_\mu (x)=\int_{[0, 1]}\frac{x}{(1-\l)x+\l} \dd \mu(\l) \qquad \ler{x>0},
\ee
where $\mu=T_{\#} m,$ that is, $\mu(A)=m\ler{T^{-1}(A)}$ for every Borel set $A \subseteq [0,1].$ This representation is also well-known and appears --- among others --- in \cite{hansen-laa-13} and \cite{yam-18}.
Note that if $m$ is absolutely continuous with respect to the Lebesgue measure and $\dd m(t)=\rho(t) \dd t,$ then the density of $\mu=T_{\#} m$ is given by $\dd \mu(\l)=\frac{1}{(1-\l)^2}\rho\ler{\frac{\l}{1-\l}} \dd \l.$
\par
Throughout this note, $\cH$ stands for a finite dimensional complex Hilbert space, $\bh$ denotes the set of all linear operators on $\cH,$ and $\bh^{sa}$ and $\bh^{++}$ stand for the set of all self-adjoint and positive definite operators, respectively. On $\bh^{sa}$ we consider the usual L\"owner order induced by positivity. The \emph{Fr\'echet derivative} of a map $\psi: \bh^{sa} \supseteq \mathcal{U} \rightarrow \mathcal{V}$ at the point $X \in \mathcal{U}$ is denoted by $\D \psi (X)[\cdot].$ Here, $\mathcal{U}$ is an open subset of $\bh^{sa},$ usually the cone of positive definite operators, and the target space $\mathcal{V}$ is usually $\R$ or $\bh^{sa}.$ Note that in the latter case $\D \psi (X)[\cdot]$ is a linear map from $\bh^{sa}$ into itself. The symbol $I$ denotes the identity operator on $\cH.$
\par
For positive definite operators $A, B \in \bh^{++},$ the {\it Kubo-Ando connection} generated by the operator monotone function $f_\mu: (0, \infty) \rightarrow (0, \infty)$ is denoted by $A \sigma_{f_\mu} B,$ and is defined by
\be \label{eq:KA-mean-def}
A \sigma_{f_\mu} B = A^{\fel}f_\mu\ler{A^{-\fel} B A^{-\fel}}A^{\fel}.
\ee
A Kubo-Ando connection $\sigma_{f_\mu}$ is a \emph{mean} if and only if $f_\mu(1)=\mu\ler{[0,1]}=1.$ In the sequel, we will restrict our attention to means. We denote by $\cP\ler{[0,1]}$ the set of all Borel probability measures on $[0,1],$ and by $c\ler{\mu}:=\int_{[0,1]}\l \dd \mu (\l)$ the center of mass of $\mu.$ There is a natural way to assign a weight parameter to a mean $\sigma_{f_\mu},$ namely, $W\ler{\sigma_{f_\mu}}:=f_\mu'(1)=c\ler{\mu}.$ More details about this weight parameter can be found in \cite{yam-18}, we only mention that for the weighted arithmetic, geometric, and harmonic means generated by
$$ a_\l(x)=(1-\l)+\l x, \, g_\l(x)=x^\l, \text{ and } h_\l(x)=\ler{(1-\l) +\l x^{-1}}^{-1},$$
respectively, we have
$W\ler{\sigma_{a_\l}}=W\ler{\sigma_{g_\l}}=W\ler{\sigma_{h_\l}}=\l.$ That is, this weight parameter coincides with the usual one in the most important special cases.

\subsection{Convex order} \label{susec:conv-ord}
The \emph{convex order} is a well-known relation between probability measures; for $\mu, \nu \in \cP\ler{[0,1]},$ we say that $\mu \preccurlyeq \nu$ if for all convex functions $u: [0,1] \rightarrow \R$ we have $\int_{[0,1]} u \, \dd \mu \leq \int_{[0,1]} u \, \dd \nu.$ It is clear that for all $\mu \in \cP\ler{[0,1]}$ with $c\ler{\mu}=\l$ we have $\delta_\l \preccurlyeq \mu \preccurlyeq (1-\l)\delta_0 + \l \delta_1,$ where $\delta_x$ denotes the Dirac mass concentrated on $x.$ For any fixed $x>0,$ the map $\l \mapsto \frac{x}{(1-\l)x+\l}$ is convex. Therefore, if $\mu \preccurlyeq \nu,$ then $f_\mu(x)\leq f_\nu(x)$ for all $x>0,$ and hence $A \sigma_{f_\mu} B \leq A \sigma_{f_\nu} B$ for all $A,B \in \bh^{++}.$ Consequently, if $\nu=\ler{1-c\ler{\mu}}\delta_0 + c\ler{\mu} \delta_1,$ then
$A \sigma_{f_\nu} B-A \sigma_{f_\mu} B$ is always positive, in particular, $\tr \ler{A \sigma_{f_\nu} B-A \sigma_{f_\mu} B}\geq 0.$ This quantity is exactly the one we are interested in.

\section{Basic properties of quantum Hellinger distances} \label{sec:basic-prop}
We are interested in divergences of the form
\be \label{eq:div-def-1}
\phi_\mu(A,B):=\tr \ler{\ler{1-c\ler{\mu}} A + c\ler{\mu} B - A \sigma_{f_\mu} B} ,
\ee
where $\mu \in \cP\ler{[0,1]}.$
To avoid trivialities, we assume in the sequel that the support of $\mu$ is strictly larger than $\{0,1\},$ and therefore, $f_\mu$ is non-affine --- in fact, it is strictly concave
\par
If $\mu$ is the \emph{arcsine distribution,} that is, $\dd \mu(\l)=\frac{1}{\pi \sqrt{\l(1-\l)}} \dd \l,$ then
$$
\phi_\mu(A,B)=\tr \ler{\frac{1}{2}(A + B) - A \# B},
$$
where $\#$ is the Pusz-Woronowitz geometric mean \cite{pusz-woron}. The square root of this quantity (up to an irrelevant multiplicative constant) was considered in \cite{bhatia-paper} as a possible quantum (or matrix) version of the classical Hellinger distance. Therefore, we will call the quantities of the form \eqref{eq:div-def-1} \emph{generalized quantum Hellinger divergences.}
\par
We easily get that
\be \label{eq:phi-manip}
\phi_\mu(A,B)=\tr \lerc{A \cdot g_\mu \ler{A^{-\fel}BA^{-\fel}}},
\ee
where $g_\mu: (0, \infty) \rightarrow [0, \infty)$ is defined by
\be \label{eq:g-def}
g_\mu(x)=\ler{1-c\ler{\mu}}+c\ler{\mu} x-f_\mu(x).
\ee
\begin{remark} \label{rem:nons-m-q-f-div}
We note that $g_\mu$ is operator convex as $f_\mu$ is operator concave, and hence generalized quantum Hellinger divergences belong to the family of \emph{maximal quantum $f$-divergences} studied for example in \cite{hiai-mosonyi-17,hiai-neumann,matsumoto-chapter,petz-ruskai}. This latter divergence class consists of quantities of the form $S_f(A,B)=\tr A f \ler{A^{-\fel}BA^{-\fel}},$ where $A, B \in \bh^{++},$ and $f: \, (0, \infty) \rightarrow \R$ is operator convex \cite{hiai-mosonyi-17, petz-ruskai}. However, this level of generality may lead to counter-intuitive phenomena. For instance, the maximal quantum $f$-divergence can be negative (see, e.g., \cite[Example 4.4]{hiai-mosonyi-17}, where $f(x)=x \log{x},$ and $S_f\ler{I, e^{-1} I}=- \mathrm{dim}\ler{\cH} e^{-1}<0$); and it may happen that $S_f(A,A)>0$ for all $A \in \bh^{++}$ (see, e.g., \cite[Example 4.2]{hiai-mosonyi-17}, where $f(x)=x^2,$ and $S_f(A,A)=\tr A>0$ for all $A \in \bh^{++}$). That is, maximal quantum $f$-divergences are not divergences in the sense of \cite[Sec. 1.2 \& 1.3]{amari-book} in general. In particular, they are not necessarily positive definite. (We call a divergence $D$ positive definite, if $D(A,B)\geq 0$ for every $A,B \in \bh^{++},$ and $D(A,B)=0$ if and only if $A=B.$)
\end{remark}

Now we check that generalized quantum Hellinger divergences are intimately related to \emph{operator valued Bregman divergences,} and hence are reasonable measures of dissimilarity and genuine divergences in the sense of \cite[Sec. 1.2 \& 1.3]{amari-book}.

\subsection{The relation with Bregman divergences} \label{subsec:rel-breg}
Note that $h_\mu:=-f_\mu$ is an operator convex function, and that
$$
g_\mu(x)=\ler{1-c\ler{\mu}}+c\ler{\mu} x+h_\mu(x)=h_\mu(x)-h_\mu(1)-h_\mu'(1)(x-1).
$$
The operator valued Bregman divergence generated by the operator convex function $h_\mu$ reads as follows:
$$
H_{h_\mu}^{(op)}(X,Y)=h_\mu(X)-h_\mu(Y)-\D h_\mu(Y)[X-Y].
$$
In particular,
$$
H_{h_\mu}^{(op)}\ler{A^{-\fel}BA^{-\fel},I}=h_\mu(A^{-\fel}BA^{-\fel})-h_\mu(I)-\D h_\mu(I)\left[A^{-\fel}BA^{-\fel}-I \right].
$$
As $\D h_\mu (I)$ coincides with the multiplication by the constant $-c\ler{\mu},$ and $h_\mu'(I)=-c\ler{\mu} I,$
we get that
$$
H_{h_\mu}^{(op)}\ler{A^{-\fel}BA^{-\fel},I}=g_{\mu}(A^{-\fel}BA^{-\fel}).
$$
Therefore, we obtain the following claim.
\begin{claim} \label{claim:breg-conn}
The generalized quantum Hellinger divergence $\phi_\mu$ defined in \eqref{eq:div-def-1} can be expressed by an operator valued Bregman divergence as follows:
\be \label{eq:breg-relation}
\phi_\mu(A,B)=\tr \lerc{A \cdot H_{h_\mu}^{(op)}\ler{A^{-\fel}BA^{-\fel},I}} \qquad \ler{A, B \in \bh^{sa}}.
\ee
\end{claim}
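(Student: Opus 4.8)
The plan is to reduce the claimed identity to an operator-level equality between the generating function $g_\mu$ of the divergence and the Bregman divergence $H_{h_\mu}^{(op)}(\cdot, I),$ and then to integrate this against $A$ under the trace. Concretely, by the reformulation \eqref{eq:phi-manip} we already know that
\[
\phi_\mu(A,B)=\tr \lerc{A \cdot g_\mu \ler{A^{-\fel}BA^{-\fel}}},
\]
so, writing $X:=A^{-\fel}BA^{-\fel},$ it suffices to prove the operator identity $g_\mu(X)=H_{h_\mu}^{(op)}(X,I)$ and then apply the linear functional $T \mapsto \tr\lerc{A \cdot T}.$

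To obtain this operator identity I would first record the scalar relation $g_\mu(x)=h_\mu(x)-h_\mu(1)-h_\mu'(1)(x-1),$ which holds by the very definition of $g_\mu$ together with $h_\mu(1)=-f_\mu(1)=-1$ and $h_\mu'(1)=-f_\mu'(1)=-c\ler{\mu}.$ Spelling out the definition of the operator valued Bregman divergence at the pair $(X,I)$ gives
\[
H_{h_\mu}^{(op)}(X,I)=h_\mu(X)-h_\mu(I)-\D h_\mu(I)\lerm{X-I},
\]
so the two expressions will agree provided $\D h_\mu(I)\lerm{X-I}=h_\mu'(1)(X-I),$ i.e. provided the Fr\'echet derivative of $h_\mu$ at the identity acts as scalar multiplication by $h_\mu'(1)=-c\ler{\mu}.$

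The one genuinely nontrivial step, and the place where I expect the only real work to lie, is the evaluation of $\D h_\mu(I).$ I would invoke the Daleckii--Krein formula, which expresses $\D h_\mu(Y)$ in the eigenbasis of $Y$ through the first divided differences $h_\mu^{[1]}(\lambda_i,\lambda_j)$ of the spectrum of $Y.$ At $Y=I$ the spectrum is the single eigenvalue $1,$ so every divided difference collapses to $h_\mu^{[1]}(1,1)=h_\mu'(1),$ whence $\D h_\mu(I)\lerm{H}=h_\mu'(1)H=-c\ler{\mu}H$ for every $H\in\bh^{sa}.$ Substituting this back and invoking the scalar relation above through the continuous functional calculus yields $H_{h_\mu}^{(op)}(X,I)=g_\mu(X)$ as operators; inserting this into the displayed reformulation of $\phi_\mu$ completes the proof. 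Everything apart from the derivative computation is routine algebra already prepared in the paragraphs preceding the claim.
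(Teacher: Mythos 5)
Your proof is correct and follows the paper's own derivation essentially verbatim: the paper likewise reduces \eqref{eq:breg-relation} to the operator identity $g_\mu(X)=H_{h_\mu}^{(op)}(X,I)$ for $X=A^{-\fel}BA^{-\fel}$ via the scalar relation $g_\mu(x)=h_\mu(x)-h_\mu(1)-h_\mu'(1)(x-1)$ and the fact that $\D h_\mu(I)$ acts as multiplication by $-c\ler{\mu}$, and then traces against $A$ using \eqref{eq:phi-manip}. The only difference is cosmetic: you justify $\D h_\mu(I)[H]=h_\mu'(1)H$ explicitly through the Daleckii--Krein divided-difference formula, whereas the paper simply asserts this fact.
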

For a detailed study of Bregman divergences on matrices we refer to \cite{pv-15}.
\par
Now we are in the position to check that generalized quantum Hellinger divergences are genuine divergences in the sense of Amari \cite[Sec. 1.2 \& 1.3]{amari-book}.

\begin{claim} \label{claim:div-valid}
For any $\mu \in \cP{[0,1]},$ the map
\be \label{eq:phi-mu-map}
\phi_\mu: \, ^{++} \times \bh^{++} \rightarrow [0,\infty); \, (A,B) \mapsto \phi_\mu(A,B)
\ee
satisfies the followings.
\ben[label=(\roman*)]
\item
$\phi_\mu(A,B)\geq 0$ and $\phi_\mu(A,B)=0$ if and only if $A=B.$
\item
The first derivative of $\phi_\mu$ in the second variable vanishes at the diagonal, that is,
$\D\ler{\phi_\mu(A,\cdot)}(A)=0 \in \mathrm{Lin}\ler{\bh^{sa},\R}$ for all $A\in \bh^{++}.$
\item 
The second derivative of $\Phi_\mu$ in the second variable is positive at the diagonal, that is,
$\D^2 \ler{\phi_\mu(A,\cdot)}(A)[Y,Y] \geq 0$ for all $Y \in \bh^{sa}.$
\een
\end{claim}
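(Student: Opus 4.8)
The plan is to reduce all three items to elementary facts about the scalar function $g_\mu$ together with the representation $\phi_\mu(A,B)=\tr\lerc{A\cdot g_\mu\ler{A^{-\fel}BA^{-\fel}}}$ from \eqref{eq:phi-manip}. The key observation is that $g_\mu$ is exactly the gap between $f_\mu$ and its tangent line at $x=1$: since $f_\mu(1)=1$ and $f_\mu'(1)=c\ler{\mu}$, the affine term in \eqref{eq:g-def} equals $f_\mu(1)+f_\mu'(1)(x-1)$, so $g_\mu(x)=f_\mu(1)+f_\mu'(1)(x-1)-f_\mu(x)$. Concavity of $f_\mu$ gives $g_\mu\geq 0$ on $(0,\infty)$ with $g_\mu(1)=0$ and $g_\mu'(1)=c\ler{\mu}-f_\mu'(1)=0$, while the standing assumption $\supp{\mu}\supsetneq\lerc{0,1}$ makes $f_\mu$ \emph{strictly} concave, upgrading these to $g_\mu(x)=0$ precisely for $x=1$, and to $g_\mu''(1)=-f_\mu''(1)>0$.

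For (i) I would note that $g_\mu\ler{A^{-\fel}BA^{-\fel}}\geq 0$ by functional calculus (as $g_\mu\geq 0$), whence
\[
\phi_\mu(A,B)=\tr\lerc{A^{\fel}\,g_\mu\ler{A^{-\fel}BA^{-\fel}}\,A^{\fel}}\geq 0 .
\]
Since $A$ is invertible, equality forces the positive semidefinite operator $g_\mu\ler{A^{-\fel}BA^{-\fel}}$ to vanish; as $g_\mu$ vanishes only at $1$, every eigenvalue of $A^{-\fel}BA^{-\fel}$ equals $1$, i.e. $A^{-\fel}BA^{-\fel}=I$, equivalently $B=A$. This refines the bare positivity already obtained from the convex-order argument of Subsection \ref{susec:conv-ord} into the required positive definiteness.

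For (ii) and (iii) I would set $\Gamma(B):=A^{-\fel}BA^{-\fel}$, a \emph{linear} map with $\D\Gamma(B)[Y]=A^{-\fel}YA^{-\fel}$ and $\Gamma(A)=I$, and differentiate $\psi:=\phi_\mu(A,\cdot)=\tr\lerc{A\cdot g_\mu(\Gamma(\cdot))}$ by the chain rule. The decisive simplification is that at the diagonal the base point of the functional calculus is the scalar operator $\Gamma(A)=I$, which commutes with every direction; hence the divided-difference machinery collapses, and for self-adjoint $Z$ the exact one-variable expansion $g_\mu(I+tZ)=g_\mu(1)I+t\,g_\mu'(1)Z+\tfrac{t^2}{2}g_\mu''(1)Z^2+\bigO{t^3}$ (apply $g_\mu$ eigenvalue-wise in the eigenbasis of $Z$) yields $\D g_\mu(I)[Z]=g_\mu'(1)Z$ and $\D^2 g_\mu(I)[Z,Z]=g_\mu''(1)Z^2$. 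Writing $Z=A^{-\fel}YA^{-\fel}$ and using $\D^2\Gamma=0$, the chain rule gives
\[
\D\psi(A)[Y]=g_\mu'(1)\,\tr\lerc{AZ}=0,\qquad \D^2\psi(A)[Y,Y]=g_\mu''(1)\,\tr\lerc{AZ^2}.
\]
Since $\tr\lerc{AZ^2}=\tr\lerc{A^{\fel}Z^2A^{\fel}}\geq 0$ and $g_\mu''(1)>0$, item (iii) follows, and $g_\mu'(1)=0$ gives item (ii).

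The computations are short once the setup is fixed, so there is no serious obstacle; the only point requiring care is the rigorous chain-rule differentiation of the operator function $B\mapsto g_\mu(\Gamma(B))$ under the trace, together with the justification that at a scalar base point the first and second Fréchet derivatives reduce to $g_\mu'(1)Z$ and $g_\mu''(1)Z^2$ respectively. This is exactly the step that makes essential use of the commutativity of $I$ with all perturbations, and it is also where the strictness of the concavity of $f_\mu$ (hence $g_\mu''(1)>0$) enters to guarantee that the diagonal is a genuine nondegenerate minimum.
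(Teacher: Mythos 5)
Your proof is correct, but it takes a genuinely different route from the paper. The paper disposes of this claim in three lines: it invokes Claim \ref{claim:breg-conn}, which rewrites $\phi_\mu(A,B)=\tr \lerc{A \cdot H_{h_\mu}^{(op)}\ler{A^{-\fel}BA^{-\fel},I}}$ with $h_\mu=-f_\mu$, and then cites the known fact that (operator valued) Bregman divergences satisfy (i)--(iii), transporting those properties through the trace. You instead work directly with $g_\mu$ and carry out the computations by hand: positivity of the tangent-line gap for (i), and the collapse of the Fr\'echet derivatives of the functional calculus at the scalar base point $\Gamma(A)=I$ for (ii) and (iii), giving $\D\psi(A)[Y]=g_\mu'(1)\tr\lerc{AZ}=0$ and $\D^2\psi(A)[Y,Y]=g_\mu''(1)\tr\lerc{AZ^2}\geq 0$ with $Z=A^{-\fel}YA^{-\fel}$. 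Note that your opening observation --- that $g_\mu$ is exactly the gap between $f_\mu$ and its tangent at $x=1$ --- is precisely the content of the paper's Bregman reduction, so the two arguments share the same kernel; what your version buys is self-containedness (no appeal to Bregman-divergence literature, and an explicit verification of the derivative collapse at a scalar point, which the paper leaves implicit), at the cost of more computation.

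One small logical point deserves repair: strict concavity of $f_\mu$ does \emph{not} by itself imply $g_\mu''(1)=-f_\mu''(1)>0$ (a strictly concave function can have vanishing second derivative at a point). Here the strict inequality follows instead from the integral representation: differentiating \eqref{eq:int-rep-new} twice gives
\begin{equation*}
f_\mu''(1)=-2\int_{[0,1]}\l(1-\l)\, \dd \mu(\l)<0
\end{equation*}
whenever $\supp{\mu} \not\subseteq \lerc{0,1}$, which is the standing assumption. That said, for item (iii) as stated you only need $g_\mu''(1)\geq 0$, which plain concavity already supplies, so this affects only your stronger remark about nondegeneracy of the minimum, not the validity of the proof of the claim.
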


\begin{proof}
Bregman divergences are clearly divergences (see, e.g., \cite[Sec. 1]{bhatia-paper}).That is,
\ben[label=(\roman*)]
\item
$H_{h_\mu}^{(op)}\ler{A^{-\fel}BA^{-\fel},I}\geq 0 \in \bh,$ and $H_{h_\mu}^{(op)}\ler{A^{-\fel}BA^{-\fel},I}=0$ if and only if $A=B,$
\item
$\D\ler{H_{h_\mu}^{(op)}\ler{A^{-\fel} \, \cdot \, A^{-\fel},I}}(A)=0 \in \mathrm{Lin}\ler{\bh^{sa}}$ for every $A\in \bh^{++},$
\item 
$\D^2 \ler{H_{h_\mu}^{(op)}\ler{A^{-\fel} \, \cdot \, A^{-\fel},I}}(A)[Y,Y] \geq 0 \in \bh$ for all $Y \in \bh^{sa}.$
\een
Now Claim \ref{claim:div-valid} follows from Claim \ref{claim:breg-conn}.
\end{proof}

\subsection{Joint convexity, data processing inequality} \label{subsec:convexity-DPI}
As generalized quantum Hellinger divergences belong to the family of maximal quantum $f$-divergences, they are jointly convex and they satisfy the data processing inequality, which is particularly important from the quantum information theory viewpoint. For details, see \cite{hiai-mosonyi-17,hiai-neumann,matsumoto-chapter,petz-ruskai}. We recall these important properties for convenience.

\begin{property}[Joint convexity] \label{prop:hell-div-j-conv}
The generalized quantum Hellinger divergence $\phi_\mu$ defined in \eqref{eq:div-def-1} is jointly convex on $\bh^{++} \times \bh^{++}.$
\end{property}

\begin{property}[Data processing inequality] \label{prop:DPI}
Let $T: \bh \rightarrow \bh$ be a \emph{quantum channel,} that is, a completely positive and trace preserving (CPTP) map. Let $\mu \in \cP{[0,1]}$ be arbitrary. Then
\be \label{eq:DPI}
\phi_\mu\ler{T(A),T(B)} \leq \phi_\mu \ler{A,B}
\ee
holds for every $A, B \in \bh^{++}.$
\end{property} 

\section{Barycenters} \label{sec:barycenters}

The notion of barycenter (or least squares mean) plays a central role in averaging procedures related to various topics in mathematics and mathematical physics. Given a metric space $\ler{X, \rho}$ and an $m$-tuple $a_1, \dots, a_m$ in $X$ with positive weights $w_1, \dots, w_m$ such that $\sum_{j=1}^m w_j=1,$ the barycenter (or Fr\'echet mean or Karcher mean or Cartan mean) is defined to be
$$
\argmin_{x \in X}\sum_{j=1}^m w_j \rho^2\ler{a_j,x}.
$$
In our setting, $X=\bh^{++},$ and the generalised quantum Hellinger divergence $\phi_\mu$ plays the role of the squared distance $\rho^2,$ although it is not the square of any true metric in general.
\par
That is, we consider the optimization problem
\be \label{eq:opt-prob}
\argmin_{X \in \bh^{++}} \sum_{j=1}^m w_j \phi_\mu\ler{A_j, X},
\ee
where the positive definite operators $A_1, \dots, A_m$ and the weights $w_1, \dots w_m$ are fixed. By the strict concavity of $f_\mu,$ the function
$$
X \mapsto \phi_\mu \ler{A,X}=\tr \ler{\ler{1-c\ler{\mu}} A + c\ler{\mu} X - A^{\fel}f_\mu\ler{A^{-\fel} X A^{-\fel}}A^{\fel}}
$$
is strictly convex on $\bh^{++},$ see, e.g., \cite[2.10. Thm.]{carlen}.
Therefore, there is a unique solution $X_0$ of \eqref{eq:opt-prob}, and it is necessarily a critical point of the function $X \mapsto \sum_{j=1}^m w_j \phi_\mu\ler{A_j, X}.$ That is, it satisfies
\be \label{eq:diff-vanishes}
\D \ler{\sum_{j=1}^m w_j \phi_\mu\ler{A_j, \cdot}}(X_0)[Y]=0 \qquad \ler{Y \in \bh^{sa}}.
\ee

Easy computations give that
\be \label{eq:diff-transformed}
\D \ler{\sum_{j=1}^m w_j \phi_\mu\ler{A_j, \cdot}}(X)[Y]= c\ler{\mu} \tr Y -\sum_{j=1}^m w_j \tr \D F_{\mu, A_j}(X)[Y],
\ee
where for a positive definite operator $A,$ the map $F_{\mu,A}: \bh^{++} \rightarrow \bh^{++}$ is defined by
\be \label{eg:f-a-def}
F_{\mu,A}(X):=A \sigma_{f_\mu} X = A^{\fel}f_\mu\ler{A^{-\fel} X A^{-\fel}}A^{\fel}.
\ee

By differentiating \eqref{eq:int-rep-new}, we have
\be \label{eq:div-x-t-rep}
\D f_\mu(X) [Y]=\int_{[0,1]} \l \ler{(1-\l) X + \l I}^{-1} Y \ler{(1-\l) X + \l I}^{-1} \dd \mu (\l)
\ee
for $X \in \bh^{++}, \, Y \in \bh^{sa}.$
Consequently,
$$
\D F_{\mu, A_j}(X)[Y]
$$
$$
=\int_{[0,1]} \l A_j^\fel \ler{(1-\l) A_j^{-\fel} X A_j^{-\fel}+\l I }^{-1} A_j^{-\fel} Y A_j^{-\fel} \ler{(1-\l) A_j^{-\fel} X A_j^{-\fel}+\l I }^{-1} A_j^\fel \dd \mu (\l)
$$
\be \label{eq:diff-concrete-form}
=\int_{[0,1]} \l \ler{(1-\l)X A_j^{-1}+ \l I}^{-1} Y \ler{(1-\l) A_j^{-1} X+ \l I}^{-1}  \dd \mu (\l).
\ee

By the linearity and the cyclic property of the trace, we get from \eqref{eq:diff-transformed} and \eqref{eq:diff-concrete-form} that \eqref{eq:diff-vanishes} is equivalent to
\be \label{eq:}
\tr \left[Y \ler{c\ler{\mu} I-\sum_{j=1}^m w_j \int_{[0,1]} \l \abs{(1-\l) A_j^{-1} X+ \l I}^{-2} \dd \mu (\l)}\right] = 0 \qquad \ler{Y \in \bh^{sa}},
\ee
where $|\cdot|$ stands for the absolute value of an operator, that is, $\abs{Z}=\ler{Z^*Z}^{\fel}.$ 
This latter equation amounts to 
\be \label{eq:barycenter-necessary}
c\ler{\mu} I= \sum_{j=1}^m w_j \int_{[0,1]} \l \abs{(1-\l) A_j^{-1} X+ \l I}^{-2} \dd \mu (\l).
\ee
So we obtained the following characterization of the barycenter.

\begin{theorem} \label{thm:fo}
Let $\mu \in \cP{[0,1]}$ and let $\phi_\mu$ be the generalized quantum Hellinger divergence generated by $\mu,$ that is,
$$
\phi_\mu\ler{A,B}=\tr \ler{\ler{1-c\ler{\mu}} A + c\ler{\mu} B - A \sigma_{f_\mu} B} \qquad \ler{A,B \in \bh^{++}}.
$$
Then the barycenter (or Cartan mean or Fr\'echet mean or Karcher mean) of the positive definite operators $A_1, \dots, A_m$ with positive weights $w_1, \dots, w_m$ with respect to $\phi_\mu,$ i.e.,
$$
\argmin_{X \in \bh^{++}} \sum_{j=1}^m w_j \phi_\mu\ler{A_j, X}
$$
coincides with the unique positive definite solution of the matrix equation
\be \label{eq:barycenter-char}
c\ler{\mu} I= \sum_{j=1}^m w_j \int_{[0,1]} \l \abs{(1-\l) A_j^{-1} X+ \l I}^{-2} \dd \mu (\l).
\ee
\end{theorem}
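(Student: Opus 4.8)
The plan is to read the theorem off as the explicit solution of the first-order optimality condition for the strictly convex objective in \eqref{eq:opt-prob}. Accordingly the proof splits into three tasks: securing existence and uniqueness of the minimizer, writing down the stationarity equation \eqref{eq:diff-vanishes}, and simplifying that equation to the matrix equation \eqref{eq:barycenter-char}.

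First I would argue that $X \mapsto \sum_{j=1}^m w_j \phi_\mu(A_j, X)$ is strictly convex on the open cone $\bh^{++}$. For each fixed $A$ the affine part of $\phi_\mu(A, \cdot)$ contributes nothing, so the content is the strict concavity of $X \mapsto \tr A^\fel f_\mu\ler{A^{-\fel} X A^{-\fel}} A^\fel$. This is where the genuine operator-theoretic input enters: the standing assumption $\supp{\mu} \supsetneq \{0,1\}$ makes $f_\mu$ strictly operator concave, whence the associated trace functional is strictly concave by \cite[2.10. Thm.]{carlen}. A positive-weight sum preserves strict convexity, so the objective is strictly convex and differentiable on $\bh^{++}$; it therefore has a unique minimizer $X_0,$ which is its unique critical point, i.e. the unique positive definite solution of \eqref{eq:diff-vanishes}.

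Next I would compute the derivative occurring in \eqref{eq:diff-vanishes}. Differentiating the integral representation \eqref{eq:int-rep-new} under the integral sign yields \eqref{eq:div-x-t-rep}; the interchange of derivative and integral is legitimate because, on a neighbourhood of any $X \in \bh^{++},$ the resolvents $\ler{(1-\l)X + \l I}^{-1}$ are norm-bounded uniformly in $\l \in [0,1],$ as $(1-\l)X + \l I$ is bounded below by a fixed positive multiple of $I.$ Applying the chain rule to $F_{\mu, A}(X) = A^\fel f_\mu\ler{A^{-\fel} X A^{-\fel}} A^\fel,$ together with the conjugation identity $A^\fel \ler{(1-\l) A^{-\fel} X A^{-\fel} + \l I}^{-1} A^{-\fel} = \ler{(1-\l) X A^{-1} + \l I}^{-1},$ produces the concrete form \eqref{eq:diff-concrete-form}.

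Finally I would assemble the stationarity equation. Feeding \eqref{eq:diff-concrete-form} into \eqref{eq:diff-transformed} and using linearity and the cyclic property of the trace, each summand contributes $\int_{[0,1]} \l \tr\lerm{Y \ler{(1-\l) A_j^{-1} X + \l I}^{-1} \ler{(1-\l) X A_j^{-1} + \l I}^{-1}} \dd \mu(\l).$ The key observation is that for self-adjoint $X, A_j$ one has $(1-\l) X A_j^{-1} + \l I = \ler{(1-\l) A_j^{-1} X + \l I}^{*},$ so the product of the two resolvents is exactly $\abs{(1-\l) A_j^{-1} X + \l I}^{-2}.$ Thus \eqref{eq:diff-vanishes} reads $\tr\lerm{Y \ler{c(\mu) I - \sum_{j=1}^m w_j \int_{[0,1]} \l \abs{(1-\l) A_j^{-1} X + \l I}^{-2} \dd \mu(\l)}} = 0$ for all $Y \in \bh^{sa},$ and since the bracketed operator is self-adjoint this forces \eqref{eq:barycenter-char}. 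The main obstacle is the strict-convexity step: it is the only point requiring nontrivial operator theory, and it is precisely what upgrades ``critical point'' to ``unique positive definite minimizer,'' making \eqref{eq:barycenter-char} a genuine characterization rather than a mere necessary condition.
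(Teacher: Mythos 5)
Your proposal is correct and follows essentially the same route as the paper: strict convexity of the objective (via strict concavity of $f_\mu$ and \cite[2.10.~Thm.]{carlen}) to get a unique minimizer identified with the unique critical point, differentiation of the integral representation \eqref{eq:int-rep-new} to obtain \eqref{eq:div-x-t-rep} and \eqref{eq:diff-concrete-form}, and cyclicity of the trace plus the adjoint identity $\ler{(1-\l)XA_j^{-1}+\l I}=\ler{(1-\l)A_j^{-1}X+\l I}^*$ to produce $\abs{(1-\l)A_j^{-1}X+\l I}^{-2}$ and hence \eqref{eq:barycenter-char}. Your added justifications (uniform resolvent bounds legitimizing differentiation under the integral, and self-adjointness of the bracketed operator so that the trace pairing against all $Y\in\bh^{sa}$ forces it to vanish) are welcome elaborations of steps the paper leaves implicit, not a different argument.
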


\section{The commutative case}
In this section we show that in the commutative case formula \eqref{eq:barycenter-char} can be greatly simplified (see \eqref{eq:comm-crit-point} later), furthermore, the conditions on $f$ can be relaxed. Recall that in the general non-commutative case, the generating function $f$ was operator monotone (or equivalently, operator concave), and hence smooth ($C^\infty$), see \eqref{eq:int-rep-orig} and \eqref{eq:int-rep-new}. When dealing with commuting operators, we need concavity only in the classical one-variable sense, and hence we require much less regularity on $f.$ For now, we only require that $f: (0, \infty) \rightarrow \R$ is a strictly concave $C^1$ function.
\par
Let $\cA \subset \bh$ be a maximal Abelian subalgebra (MASA). In this commutative case, the proper analogue of the generalized quantum Hellinger divergence \eqref{eq:div-def-1} is
$$
\phi_f(A,B):=\tr \ler{\ler{f(1)-f'(1)} A + f'(1) B - A^\fel f\ler{A^{-\fel} B A^{-\fel}} A^{\fel} }
$$
\be \label{eq:div-def-comm}
=\tr \ler{\ler{f(1)-f'(1)} A + f'(1) B - A f\ler{A^{-1}B} } \qquad \ler{A,B \in \cA \cap \bh^{++}}.
\ee
Note that now there is no underlying measure involved and the function class that we choose the $f'$s from is much larger than that in the general non-commutative case.
Also note that
\be \label{eq:phi-manip-comm}
\phi_f(A,B)=\tr A \cdot g\ler{A^{-\fel}B A^{-\fel}}=\tr A \cdot g\ler{A^{-1}B}=\tr A \cdot g\ler{B^\fel A^{-1} B^\fel},
\ee
where $g(x)=f(1)+f'(1)(x-1)-f(x).$
We easily get that for $A, X \in \cA \cap \bh^{++}$ and $Y \in \cA \cap \bh^{sa}$ we have
\be \label{eq:comm-diff-form} 
\D \ler{\phi_f(A,\cdot)}(X)[Y]=\tr \ler{f'(I)-f' \ler{X^\fel A^{-1} X^\fel}}Y=\tr \ler{f'(I)-f' \ler{A^{-1}X}}Y,
\ee
and therefore,
$$
\D \ler{\sum_{j=1}^m w_j\phi_f\ler{A_j,\cdot}}(X)[Y]=\tr \ler{f'(I)-\sum_{j=1}^m w_j f' \ler{X^\fel A_j^{-1} X^\fel}}Y
$$
\be \label{eq:comm-diff-form-sum}
=\tr \ler{f'(I)-\sum_{j=1}^m w_j f' \ler{A_j^{-1}X}}Y.
\ee
That is, the derivative $\D \ler{\sum_{j=1}^m w_j\phi_f\ler{A_j,\cdot}}(X)$ vanishes if and only if
\be \label{eq:deriv-vanish}
\sum_{j=1}^m w_j f' \ler{A_j^{-1}X}=\sum_{j=1}^m w_j f' \ler{X^\fel A_j^{-1} X^\fel}=f'(I),
\ee
or equivalently,
\be \label{eq:deriv-vanish-2}
X=\frac{1}{f'(1)}\sum_{j=1}^m w_j X f' \ler{A_j^{-1}X}=\frac{1}{f'(1)} \sum_{j=1}^m w_j X^\fel f'\ler{\ler{X^{-\fel} A_j X^{-\fel}}^{-1}} X^\fel.
\ee
We obtained the following
\begin{proposition} \label{propo:comm-bary}
The critical point of the function $X \mapsto \sum_{j=1}^m w_j\phi_f\ler{A_j,X}$ is the unique solution $X \in \cA \cap \bh^{++}$ of the equation
\be \label{eq:comm-crit-point}
X=\frac{1}{f'(1)}\sum_{j=1}^m w_j X f' \ler{A_j^{-1}X}=\frac{1}{f'(1)} \sum_{j=1}^m w_j X^\fel f'\ler{\ler{X^{-\fel} A_j X^{-\fel}}^{-1}} X^\fel.
\ee
\end{proposition}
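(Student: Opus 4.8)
The plan is to exploit the fact that every object in sight lives inside the maximal Abelian subalgebra $\cA$: the data $A_1, \dots, A_m$, the variable $X$, and the admissible perturbation directions $Y \in \cA \cap \bh^{sa}$ all commute pairwise. After a simultaneous diagonalization the continuous functional calculus reduces to the ordinary scalar calculus applied eigenvalue-by-eigenvalue, which is precisely why we need only that $f$ be a strictly concave $C^1$ function on $(0,\infty)$ rather than operator concave. I would make this reduction explicit first, and then everything becomes a one-variable computation carried out in parallel across the common eigenbasis.

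Next I would establish that the objective $X \mapsto \sum_{j=1}^m w_j \phi_f(A_j, X)$ is strictly convex on the open cone $\cA \cap \bh^{++}$, so that it has at most one critical point, which is then automatically its unique global minimizer. Writing $\phi_f(A,X)=\tr A \cdot g\ler{A^{-1}X}$ with $g(x)=f(1)+f'(1)(x-1)-f(x)$ strictly convex by the strict concavity of $f$ (equation \eqref{eq:phi-manip-comm}), simultaneous diagonalization turns the objective, along each common eigendirection, into a positive combination of strictly convex scalar functions of the corresponding eigenvalue of $X$; strict convexity follows. Existence of the minimizer in the interior of the cone follows exactly as in the noncommutative treatment of Section \ref{sec:barycenters}, so the minimizer is characterized by the vanishing of the first derivative.

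The heart of the argument is then the Fréchet derivative computation. For the map $X \mapsto f\ler{A^{-1}X}$, the general noncommutative case would require a Daleckii–Krein divided-difference formula; but since $Y$ commutes with $A$ and $X$, the chain rule collapses to multiplication by the scalar derivative $f'$, giving $\D\ler{\phi_f(A,\cdot)}(X)[Y]=\tr\ler{\ler{f'(I)-f'\ler{A^{-1}X}}Y}$ as recorded in \eqref{eq:comm-diff-form}. Summing over $j$ yields \eqref{eq:comm-diff-form-sum}. Justifying this collapse cleanly under the commuting hypothesis is the one point that requires care, and I expect it to be the main (though modest) obstacle.

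Finally, I would set this derivative to zero for every self-adjoint $Y$ in the MASA. Because the trace pairing $\ler{Y,Z}\mapsto\tr(YZ)$ is nondegenerate on the real vector space $\cA \cap \bh^{sa}$, and the operator coefficient $f'(I)-\sum_{j} w_j f'\ler{A_j^{-1}X}$ itself lies in $\cA \cap \bh^{sa}$, vanishing of the pairing against all $Y$ forces the coefficient to vanish, which is exactly \eqref{eq:deriv-vanish}. Multiplying by $X$ (equivalently by $X^{\fel}\cdot X^{\fel}$) and dividing by $f'(1)$ produces the fixed-point equation \eqref{eq:comm-crit-point}; the two displayed forms coincide because commutativity gives $X^{\fel}A_j^{-1}X^{\fel}=A_j^{-1}X=\ler{X^{-\fel}A_j X^{-\fel}}^{-1}$. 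Combined with the uniqueness from strict convexity, this identifies the critical point as the unique positive definite solution of \eqref{eq:comm-crit-point}.
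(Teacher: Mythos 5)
Your proposal is correct and takes essentially the same route as the paper: the paper's proof is exactly the computation in \eqref{eq:phi-manip-comm}--\eqref{eq:deriv-vanish-2} preceding Proposition \ref{propo:comm-bary} (compute $\D\ler{\phi_f(A_j,\cdot)}(X)[Y]$ via the commutative collapse of the functional calculus, sum, set to zero against all $Y \in \cA \cap \bh^{sa}$, and rearrange), which you reproduce faithfully. Your added scaffolding --- the simultaneous diagonalization, the strict convexity of $g$ giving uniqueness of the critical point, and the nondegeneracy of the trace pairing on $\cA \cap \bh^{sa}$ --- only makes explicit what the paper leaves implicit, at the same level of rigor (existence of the minimizer is asserted, not proved, in both).
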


So in the commutative case, the equation characterizing the barycenter \eqref{eq:comm-crit-point} is simpler than that in the non-commutative case \eqref{eq:barycenter-char}. Note that if all the $A_j$'s are in the same MASA $\A \subset \cB(\cH)$, then the barycenter is also in $\A,$ and hence it has the form described in Proposition \ref{propo:comm-bary}. One way to show this is to use the data processing inequality (DPI) for the orthogonal projection onto $\A$ which is completely positive and trace preserving, and which is denoted by $\mathbf{E}_\A$ to express the analogy with the classical conditional expectation. So let $X_0$ be the unique minimizer of $X \mapsto \sum_{j=1}^m w_j \phi_\mu \ler{A_j,X}.$ Now
$$
\sum_{j=1}^m w_j \phi_\mu \ler{A_j,\mathbf{E}_{\A}\ler{X_0}}
=
\sum_{j=1}^m w_j \phi_\mu \ler{\mathbf{E}_{\A}\ler{A_j},\mathbf{E}_{\A}\ler{X_0}}
\leq
\sum_{j=1}^m w_j \phi_\mu \ler{A_j,X_0},
$$
hence $\mathbf{E}_{\A}\ler{X_0}=X_0$ which means that $X_0 \in \A.$
We also note that under the assumption $A_j X= X A_j$ for all $j'$s, \eqref{eq:barycenter-char} clearly coincides with \eqref{eq:comm-crit-point}, because $c\ler{\mu}=f_\mu'(1),$ and in this case, by the identity
\be \label{eq:f-mu-diff}
f_\mu'(x)=\frac{\dd}{\dd x} \ler{\int_{[0, 1]}\frac{x}{(1-\l)x+\l} \dd \mu(\l)}=\int_{[0, 1]}\frac{\l}{\ler{(1-\l)x+\l}^2} \dd \mu(\l)
\ee
we have
$$
\sum_{j=1}^m w_j \int_{[0,1]} \l \abs{(1-\l) A_j^{-1} X+ \l I}^{-2} \dd \mu (\l)
=\sum_{j=1}^m w_j \int_{[0,1]} \l \ler{(1-\l) A_j^{-1} X+ \l I}^{-2} \dd \mu (\l)
$$
\be \label{eq:acwf}
=\sum_{j=1}^m w_j f_\mu'\ler{A_j^{-1} X}.
\ee

\begin{example} \label{ex:power}
Let $f_t(x)=x^t$ for $t \in (0,1).$ Then $\phi_{f_t}$ is of the form
\be \label{eq:comm-phi-ex-power-fn}
\phi_{f_t}(A,B)=\tr \ler{(1-t)A+t B -A^{1-t}B^t}=\tr \ler{(1-t)A+t B -A \#_t B},
\ee
and the barycenter equation \eqref{eq:comm-crit-point} reads as
\be \label{eq:comm-pow-bary}
X=\sum_{j=1}^m w_j A_j^{1-t} X^t=\sum_{j=1}^m w_j A_j \#_t X=\sum_{j=1}^m w_j X \#_{1-t} A_j.
\ee
That is, the barycenter coincides with the {\it weighted power mean of order} $1-t,$ which is by definition the unique positive definite solution of the equation $X=\sum_{j=1}^m w_j X \#_{1-t} A_j,$ see \cite[Def. 3.2]{lim-palfia-jfa}.
This example does not contain new results, the above characterization of the barycenter as weighted power mean can be found, e.g., in \cite{amari-paper} or in \cite{schwander-nielsen}.
\end{example}

\begin{remark}
By the special choice $t=1/2$ in Example \ref{ex:power}, we get that the claim of Bhatia et al. saying that the barycenter and the weighted power mean of order $1/2$ coincide \cite[Thm. 9]{bhatia-paper} is true in the commutative case.
\end{remark}

\begin{example}\label{ex:log}
Set $f(x)=\log{x}.$ Then $\phi_f$ is the relative entropy, that is,
\be \label{eq:comm-phi-ex-log}
\phi_{\log}(A,B)=\tr \ler{A \ler{\log{A}-\log{B}}+B-A},
\ee
and the barycenter equation \eqref{eq:comm-crit-point} reads as
\be \label{eq:comm-log-bary}
X=\sum_{j=1}^m w_j X \ler{A_j^{-1} X}^{-1}=\sum_{j=1}^m w_j A_j .
\ee
That is, the barycenter coincides with the weighted sum of the $A_j$'s. This is well-known, see, e.g., the remarks after Theorem 4 in \cite{bhatia-paper}.
\end{example}
Note that we get Example \ref{ex:log} from Example \ref{ex:power} if we take the limit $t \to 0.$
Indeed, 
\be \label{eq:t-to-null-limit}
\lim_{t \to 0}\frac{1}{t}\phi_{f_t}(A,B)=\lim_{t \to 0}\frac{1}{t} \tr A \cdot g_t \ler{A^{-1}B},
\ee
where $g_t(x)=1+t(x-1)-x^t,$ and $\lim_{t \to 0} \frac{1}{t}\ler{1-x^t}=\log{x}$ in the locally uniform topology.


\section{Remarks} \label{sec:rems}
\subsection{A note on a paper of Bhatia et al}
In our view, Theorem 9 in \cite{bhatia-paper} is not true in general. The proof contains a gap, namely, using their notation, the fact that $I$ is a critical point for $g$ does not imply that $X_0$ is a critical point for $f,$ although formula (54) in \cite{bhatia-paper} is correct.
\par
It is true, that for commuting operators, \eqref{eq:barycenter-char} and \eqref{eq:comm-crit-point} coincide. However, these equations are different without the assumption of commutativity. To demonstrate the difference, we take the following example. Let $\mu$ be the arcsine distribution, $\dd \mu(\l)=\frac{1}{\pi \sqrt{\l(1-\l)}} \dd \l,$ let $m=2, w_1=w_2=\fel,$ and 
$$
A_1:=\left[\ba{cc} 4 & 0 \\ 0 & 1 \ea\right], \qquad A_2:=4 \left[\ba{cc} 1/2 & 1/2   \\ 1/2 & 1/2 \ea\right]+1 \left[\ba{cc} 1/2 & -1/2 \\ -1/2 & 1/2 \ea\right]=\fel \left[\ba{cc} 5 & 3 \\ 3 & 5 \ea\right].
$$
Then numerical optimization performed by Wolfram Mathematica \cite{hellinger-numerics} shows that
$$
\hat X_0:=\argmin_{X \in \bh^{++}} \sum_{j=1}^2 \fel \phi_\mu\ler{A_j, X}
$$
\be \label{eq:argmin-num}
=\argmin_{X \in \bh^{++}} \fel \tr \ler{\fel \ler{A_1+A_2}+ X- \ler{ A_1 \# X + A_2 \# X}}
=\left[\ba{cc} 2.99035 & 0.634419 \\ 0.634419 & 1.72151 \ea\right].
\ee

Note that both $A_1$ and $A_2$ have real entries. Therefore, $A_j \# \overline{X}=\overline{A_j \# X},$ and hence $\phi_\mu\ler{A_j,X}=\phi_\mu\ler{A_j,\overline X}$ holds for every $X \in \bh^{++}$ and $j \in \{1,2\},$ where $\overline X$ denotes the entrywise complex conjugate of $X.$  Consequently, the strict convexity of the functions $X \mapsto \phi_\mu\ler{A_j,X}, \, j \in \{1,2\}$ implies that $\argmin_{X \in \bh^{++}} \sum_{j=1}^2 \fel \phi_\mu\ler{A_j, X}$ has real entries. So it is enough to minimize numerically over the cone of positive definite $2 \times 2$ matrices with real entries \cite{hellinger-numerics}.
\par
However, the barycenter obtained numerically in \eqref{eq:argmin-num} does not coincide with the weighted power mean of order $1/2$ as
\be \label{eq:num-noteq}
\fel \ler{ A_1 \# \hat X_0 + A_2 \# \hat X_0}=\left[\ba{cc} 3.02915 & 0.673215 \\ 0.673215 & 1.68272 \ea\right] \neq \hat X_0.
\ee
Note that after the publication of our manuscript on \emph{arXiv.org}, a correction of \cite{bhatia-paper} dedicated to this problem was released \cite{bhatia-corr}.

\subsection{A possible measure of non-commutativity}
Motivated by the observations above, we introduce a function that quantifies the noncommutativity of the positive definite operators $A_1, \dots, A_m.$
\begin{definition} \label{def:nc-meas}
Given $\mathbf{A}=\ler{A_1, \dots, A_m} \in \ler{\bh^{++}}^m, \, \mathbf{w}=\ler{w_1, \dots, w_m} \in (0,1]^m$ with $\sum_{j=1}^m w_j=1, \, \, \mu \in \cP\ler{[0,1]},$ and a convenient metric $\rho$ on $\bh^{++},$ the $\ler{\mathbf{w},\mu, \rho }$-dependent measure of the non-commutativity of $A_1, \dots, A_m$ is defined as
\be \label{eq:nc-meas-def}
\rho \ler{\mathrm{BC}\ler{\mathbf{A}, \mathbf{w},\mu}, \mathrm{M}\ler{\mathbf{A}, \mathbf{w},\mu}}
\ee
where
$$
\mathrm{BC}\ler{\mathbf{A}, \mathbf{w},\mu}=\argmin_{X \in \bh^{++}} \sum_{j=1}^m w_j \phi_\mu\ler{A_j, X},
$$
i.e., $\mathrm{BC}\ler{\mathbf{A}, \mathbf{w},\mu}$ is the solution of \eqref{eq:barycenter-char}, and $\mathrm{M}\ler{\mathbf{A}, \mathbf{w},\mu}$ is a $\mu$-dependent $\mathbf{w}$-weighted mean of $A_1, \dots, A_m$ defined as the unique solution of the matrix equation \eqref{eq:comm-crit-point} that we recall here for convenience:
$$
X=\frac{1}{c\ler{\mu}}\sum_{j=1}^m w_j X^{\fel}f_\mu'\ler{\ler{X^{-\fel} A_j X^{-\fel}}^{-1}} X^{\fel}.
$$
\end{definition}
The detailed study of the quantity \eqref{eq:nc-meas-def} is beyond the scope of this paper, however, it may be the subject of subsequent works. 

\subsubsection*{Acknowledgements}
We are grateful to Mil\'an Mosonyi for drawing our attention to Ref.'s \cite{bhatia-paper, hiai-mosonyi-17, hiai-neumann, matsumoto-chapter, mosonyi-ogawa, petz-ruskai}, for comments on earlier versions of this paper, and for several discussions on the topic. We are also grateful to Mikl\'os P\'alfia for several discussions; to L\'aszl\'o Erd\H{o}s for his essential suggestions on the structure and highlights of this paper, and for his comments on earlier versions; and to the anonymous referee for his/her valuable comments and suggestions.

\bibliographystyle{amsplain}

\end{document}